\author{
Damiano Brigo\thanks{Imperial College London {\tt damiano.brigo@imperial.ac.uk}}
\ \ \ Marco Francischello\thanks{Imperial College London, {\tt m.francischello14@imperial.ac.uk}}  \ \ 
Andrea Pallavicini\thanks{Imperial College London and Banca IMI Milan, {\tt a.pallavicini@imperial.ac.uk}}
}
\title{An indifference approach to the cost of capital constraints: \\ KVA and beyond\thanks{The authors are grateful to Silvia Frasson, Massimo Morini, Silvia Orsenigo and Andrea Prampolini for helpful discussion on the practices of exposure management and capital requirements. } }
\theoremstyle{plain}
\newtheorem{theo}{Theorem}[section]
\newtheorem{prop}[theo]{Proposition}
\theoremstyle{remark}
\newtheorem{rem}{Remark}
\theoremstyle{definition}
\newtheorem{defi}[theo]{Definition}
\renewcommand{\phi}{\varphi}
\renewcommand{\epsilon}{\varepsilon}
\renewcommand{\geq}{\geqslant}
\renewcommand{\leq}{\leqslant}
\renewcommand{\bar}{\overline}
\renewcommand{\c}{\mathcal}
\renewcommand{\b}{\mathbb}
\newcommand{\g}{\mathfrak}
\newcommand{\ind}[1]{\mathbb{1}_{ #1}} 
\date{First version: August 17, 2017}
\begin{document}
\maketitle

\begin{abstract}
The strengthening of capital requirements has induced banks and
traders to consider charging a so called capital valuation adjustment (KVA) to the clients
in OTC transactions. This roughly corresponds to charge the clients ex-ante the
profit requirement that is asked to the trading desk. In the following we try to
delineate a possible way to assess the impact of capital constraints in the valuation of a deal. 
We resort to an optimisation stemming from an indifference pricing approach, and we study both the linear problem from the point of view of the whole bank and the non-linear problem given by the viewpoint of shareholders. We also consider the case where one optimises the median rather than the mean statistics of the profit and loss distribution.
\end{abstract}

\section{Introduction}
In this work we analyse how capital constraints influence the valuation of a deal. Typically investment banks are publicly traded companies, and hence in order to
attract shareholders the management will try to achieve a certain profit and
limit the relative risks. In order to quantify the profitability and the risks
of a trade or portfolio one can use different metrics such as return on
equity (ROE) or Risk-Adjusted Return On Capital (RAROC).
Profitability for trading desks can be checked ex post,
meaning that, after a certain period of time, the management checks if the desk
is meeting the profit requirements and acts accordingly. For the risk part,
usually traders have Value at Risk limits or net/gross exposure thresholds.

In the last years the strengthening of capital requirements induced banks and
traders to consider the possibility of charging a so called capital valuation adjustment (KVA) to the clients
in OTC transactions. This roughly corresponds to charge the clients ex-ante the
profit requirement margin that is asked to the trading desk as a consequence of the new trade, in order not to worsen the profit target. In the following we try to
delineate a possible way to assess the impact of capital constraints in the valuation of a deal.  

The role of capital in the management of a bank has always been the object of investigation in the financial literature since the seminal paper of Modigliani and Miller \cite{ModiglianiMiller}. More recently there has been interest in analysing the role of the so called \emph{risk capital}, i.e. the cost of insuring the net assets of the bank (see for example \cite{Myers,MertonPerold}) and works such as \cite{FrootStein,Turnbull} analysed the interplay between capital budgeting and risk management decisions of the bank.

 For what concerns the practitioner's literature, the current approach to "pricing" capital constraints is mainly due to \cite{GreenKenyonDennis}, and relies on a convenient replication argument which on the other hand is somewhat unsatisfactory since it is not derived from first principles. A more recent work is the one of \cite{albanese2016capital}. While they do not assume a replication argument, they still postulate somehow a form for the KVA instead of deriving it from financial considerations. Moreover their definition of KVA doesn't depend explicitly on the regulatory constraints that motivated the KVA introduction in the first place. Lastly, in \cite{morini}, the authors assume a definition of KVA and try to understand how it interacts with CVA. The techniques used in \cite{GreenKenyonDennis,albanese2016capital} are similar to the ones used in the literature on valuation adjustments (see for example \cite{BFP_FVA,brigoCCP}): while they work well in that context, they have some issues with taking into account constraints and objectives defined under the real world measure. For this reason we instead adopt an indifference pricing approach, which is similar in spirit to what has been done for example in \cite{AndersenDuffieSong}.

Our contribution to the literature is to model the impact of capital constraints on the valuation process starting from financial principles.

We are able to derive the solution of the indifference pricing problem. We find the optimal portfolio such that it is indifferent for us to do or not to do the new trade, from an ex-ante RAROC type point of view, given the existing portfolio. We do this under a value at risk or expected shortfall constraint, and both in the case of the whole bank point of view and from the shareholder point of view as well. The latter is more complicated because it results in a non-linear problem, and we analyse it in the limit where the deal is small compared to the portfolio using a second order marginal pricing approach.
 We also consider the analogous problem when we use the median as a profit and loss distribution statistics as opposed to the mean, and in this case the solution for the shareholders problem is the same as the one for the whole bank. The choice of the median puts bondholders and shareholders on the same ground when assessing impact of capital constraints on profitability.

The structure of this manuscript is the following. In Section \ref{sec:RAROC} we start from a very simple model and we explain the core idea behind our approach, in Section \ref{sec:linear} we analyse a more realistic model which is still fully analytically tractable. In Section \ref{sec:shareholders} we analyse the valuation from the point of view of shareholders and we provide an approximated formula for the impact of capital constraints. Finally, in Section \ref{sec:median} we analyse the problem using the median as an objective function.

\section{A Simple Model}\label{sec:RAROC}
In this section we provide a very simple model to introduce the main concepts used in the rest of the note. We work in a one period model. The bank makes its decisions at time 0 and sees the outcomes of its strategy at time 1. We are interested in two situations: in the first the bank can freely trade in a liquid market, while in the second the bank can still trade on the market but a client asked the bank for some OTC product.

More precisely we fix a probability space $(\Omega,\c{A},\b{P})$, we suppose that the bank is endowed with some cash $C_0$, and there is a liquid arbitrage-free market, with $d$ securities with prices
at time 0 and 1 given respectively by the vector $S_0$ and the $d$-dimensional
random variable $S_1$. In our model the bank trades at time 0 and decides how to trade at time 0 knowing only the information available at that time. A strategy on the market is hence given by a $d$-dimensional
vector $\theta$. The bank will borrow and lend money at a spread $\lambda$ over the
risk-free rate $r$.
Now that we have introduced the liquid market, we proceed to describe the OTC trade that a counterparty proposes to the bank. The counterparty is interested in \emph{buying} from the bank $q$ units of a product with random payoff $Y$ and they ask the bank for a quote on this product.
Our sign convention is that the payoff and the price are seen from the point of view of the bank, for example if the bank were asked to sell a call option, the payoff $Y$ would be less or equal than 0. 
The problem that
 we want to formalise is the following: \emph{what is the price $P(q)$ that the bank
 	is willing to make to the counterparty for the contract with pay-off $Y$.}

Of course this statement is too vague and we need to translate it in some mathematically
and financially meaningful way. To start, we need a way to quantify the
usefulness of a contract. A classical approach would be the one of endowing the bank with an utility function, but this poses some troubles. If on one hand it gives a lot of freedom in choosing the form of the utility function and its risk aversion coefficient to shape the preferences of the bank, on the other hand it introduces a lot of model risk due to the fact that is not always straightforward to link changes in the utility function with changes in the valuation.

 In this section we will use RAROC as the mean to assess the value of a contract for a bank.
RAROC is a quantity that measures the return of a bank's portfolio rescaled by its riskiness (see for example \cite{Sironi}). More specifically we compute RAROC as the ratio between the \emph{expected value} of the profit and loss of our portfolio, divided by the Value at Risk of the portfolio itself. The idea to compute an acceptable quote for the product with payoff $Y$ is the following: we will evaluate the RAROC of our portfolio without the product and the RAROC of our portfolio including the product, and then we will choose our quote $P(q)$ so that the RAROC stays the same. In other words our bank will be \emph{indifferent} with respect to enter the deal.
Determining an ask (or bid) price in a way such that the dealer is \emph{indifferent} with respect to entering the trade has been treated extensively in the mathematical finance literature (see for example \cite{Henderson}) but to our knowledge it has never been used to assess the impact of capital in the valuation of a product.

 In order to compute the RAROC we write the mathematical expression for assets and liabilities of our bank at time 1, namely, for our chosen strategy $\theta$ we have: 

\[
Assets=(\theta^TS_1)^++(qY)^++(C_0-\theta^TS_0+P(q))^+(1+r+\lambda^B);
\]
\[
Liabilities=(\theta^TS_1)^-+(qY)^-+(C_0-\theta^TS_0+P(q))^-(1+r+\lambda^B).
\]
For convenience of notation we define the equity at time 1 of the bank as: 
\begin{equation*}
\begin{aligned}X(\theta,q)&=Assets-Liabilities\\
&=qY+\theta^T(S_1-S_0(1+r+\lambda))+(C_0+P(q))(1+r+\lambda^B).
\end{aligned}
\end{equation*}

If we make the simplifying assumption that the value at risk of our portfolio is proportional to the standard deviation of the portfolio, we can define 
\begin{equation}\label{eq:rarocdefin}
RAROC(\theta, q)=\frac{\b{E}[X(\theta,q)]-C_0}{m \sqrt{Var[X(\theta,q)]}}.
\end{equation}
Unless specified otherwise when computing expected values we will do so with respect to the measure $\b{P}$. For simplicity we will not take into account time preferences of the bank, but our results can be easily generalised to the case of a deterministic temporal discount factor.

Our problem is then to find $P(q)$ such that $RAROC(\theta,0)=RAROC(\theta',q)$, where $\theta'$ is the strategy that the bank choose in the case when it also trades the new product. Such strategy is in principle different from $\theta$ and will possibly include a hedge of the product and will need to take into account capital constraints. In this section we make the simplifying assumption that the new product is contains only unhedgeable risks, independent from the other marketed product,  and that there are no capital constraints so that what we really look for is $P(q)$ such that
\[
RAROC(\theta,0)=RAROC(\theta,q).
\] 
Plugging in Equation \eqref{eq:rarocdefin} we have
\[
\frac{\b{E}[X(\theta,0)]-C_0}{m \sqrt{Var[X(\theta,0)]}}=\frac{\b{E}[X(\theta,q)]-C_0}{m \sqrt{Var[X(\theta,q)]}}.
\]
Under our assumption of independence between $Y$ and $S_1$ we can then write
\[
\frac{\b{E}[X(\theta,0)]-C_0}{m\sqrt{Var[X(\theta,0)]}}=\frac{\b{E}[X(\theta,q)]-C_0}{m \sqrt{Var[X(\theta,0)]+q^2Var[Y]}}
\]
that implies 
\[
(\b{E}[X(\theta,0)]-C_0)\left(\sqrt{1+\frac{q^2Var[Y]}{Var[X(\theta,0)]}}\right)-\b{E}[X(\theta,0)]+C_0=q\b{E}[Y]+P(q)(1+r+\lambda)
\]
If we consider a \emph{small} deal and hence we expand the quantity on the left hand side in $q$ around 0 we have:

\[
(\b{E}[X(\theta,0)]-C_0)\left(1+\frac{q^2Var[Y]}{2Var[X(\theta,0)]}\right)-\b{E}[X(\theta,0)]+C_0\simeq q\b{E}[Y]+P(q)(1+r+\lambda).
\]
Rearranging the terms we finally obtain
\[
P(q)\simeq \frac{1}{1+r+\lambda}\left(-q\b{E}[Y]+RAROC(\theta,0)\left(\frac{1}{2} \frac{m q^2 Var[Y]}{\sqrt{Var[X(\theta,0)]}}\right)\right),
\]
where we see that the acceptable price can be computed as the discounted expectation of the cash flows due to our new product plus an adjustment proportional to the variance of the new product and the $RAROC(\theta,0)$.
In the next section we are going to address the issues of the choice of $\theta$, the hedging and also the presence of capital constraints.

\section{The Linear One Period Model}\label{sec:linear}
In this section we extend what we have done in the previous one to a more realistic model. The first issue that we want to address is how to choose the bank's strategy on the liquid market. In particular we use an optimisation approach: we suppose that the bank would choose $\theta$ to maximise its value. More precisely for a certain set $\Theta(q)\subseteq \b{R}^d$ of admissible strategies, we define
\begin{equation}\label{eq:linear_optimisation}
\widetilde{v}(q)=\sup_{\theta\in\Theta(q)}\b{E}\left[X(\theta,q)\right].
\end{equation}
In principle a bank has many constraints on its trading strategies: there are regulatory constraints (capital requirements, leverage ratios), management policies (credit quality limits, economic capital levels) and so on. In this work we are mainly concerned with capital requirements imposed by the regulators since in the recent years these constraints started to increase their weight in traders' valuation of a deal. This is not to say that the other components are ignored in the market making activity of the bank, but they are taken into account in other parts of the bank or their accounting is not as sophisticated.
In this manuscript we focus on regulatory capital constraints that can be expressed as constraints on the Expected Shortfall (ES) or Value at Risk (V@R) of the bank's portfolio.
For example the internal model for market-risk capital constraint \cite{Basel_Minimum_Capital_MR} can be seen as ES based. Also the internal model for credit-risk capital \cite{BaselII,Basel_Minimum_Capital_MR} can be seen as a V@R based model (see for example \cite{Gordy}). There are different methodologies to compute the regulatory capital that should cover different risks, but for our purposes we will approximate them with the following constraint:
\begin{equation}\label{eq:admissible_strategies}
\Theta(q)=\left\{\theta\in\b{R}^2 \ | \ C_0=\nu \sqrt{\theta^TA\theta+2q\theta^Tb+\sigma_Y^2q^2}\right\},
\end{equation}

where $\sigma_Y^2$ is the variance of the payoff $Y$, and
\[Cov\left[\left(
\begin{array}{l}
S_1\\
Y
\end{array}\right)\right]=\left(
\begin{array}{c c}
A &b\\
b^T &\sigma_Y^2
\end{array}
\right).\]
In practice we are computing the capital constraint as an ES or V@R limit while approximating ES or V@R with a multiple $\nu$ of the standard deviation of the distribution of $X(\theta,q)$. Equivalently we are approximating the distribution of $X(\theta,q)$ with a distribution such that its ES or V@R is a multiple $\nu$ of its standard deviation. Moreover we are supposing that the capital constraint is binding, or in other worlds, that is optimal for the bank to have the minimum amount of capital required. Being in a one period model we are ignoring the effects of breaching the capital constraint, and the fire sales that it might generate, see for example \cite{EricRama}.

\begin{rem}
Note that here, the endowment $C_0$ plays also the role of the capital of the bank, since we are optimising the whole portfolio of the bank over one period. To translate our argument in the real world, one could consider that the bank optimises its portfolio at discrete times. At each of this times the bank would optimise its positions, but would rarely want to disinvest its whole portfolio. This can be taken care of in our model either by choosing rebalancing intervals short enough so that the optimal $\theta$ doesn't vary too much from interval to interval, or by incorporating some penalty term in the random variable $S_1$. These two remedies can always be used, but we will see later that in the linear model explored in this section, the impact of capital constraints over the valuation will not depend on the optimal strategy.
\end{rem}
\begin{rem}
Although it is useful to have regulatory constraints in mind when thinking of $\Theta(q)$, it is clear that in principle $\Theta(q)$ can describe any hard limit on the variance of the bank's portfolio.
\end{rem}

Now, similarly to what we did in the previous section, we look for a price $P(q)$ such that the bank is \emph{indifferent} with respect to enter the contract. Namely we look for $P(q)$ such that
\[
\widetilde{v}(0)=\widetilde{v}(q).
\]
\begin{rem}
The treatment of the maximisation problem \eqref{eq:linear_optimisation} is classic (see for example \cite{Cochrane,FontanaSchweizer}) while a less explicit but more general mathematical analysis of the mean-variance indifference pricing can be found in again in \cite{FontanaSchweizer}. 
\end{rem}

The solution to our problem is given by the following proposition whose proof is in the appendix.
\begin{prop}\label{prop:simplified}
Consider, for a fixed $q$, the problem of finding $P$ such that
\[
\widetilde{v}(q)=\widetilde{v}(0),
\]
where
\begin{equation}\label{eq:linear}
  \begin{aligned}
  \widetilde{v}(q)&=\sup_{\theta\in \Theta(q)}\b{E}\left[qY+\theta^T(S_1-S_0(1+r+\lambda))+(C_0+P(q))(1+r+\lambda)\right],\\
  \Theta(q)&=\left\{\theta\in\b{R}^2 \ | \ C_0=\nu \sqrt{\theta^TA\theta+2q\theta^Tb+\sigma_Y^2q^2}\right\}.
  \end{aligned}
\end{equation}

It is convenient to define $\hat{\mu}=(S_1-S_0(1+r+\lambda))$, and $\mu=\b{E}[(S_1-S_0(1+r+\lambda))]$, its expected value under the real world measure.

Equation \eqref{eq:linear} has a solution and is given by 
\[
P(q)=\frac{1}{1+r+\lambda}\left(-q\b{E}[Y]+\frac{1}{2}\left(\frac{1}{\chi(0)}-\frac{1}{\chi(q)}\right)\mu^TA^{-1}\mu+qb^TA^{-1}\mu\right),
\]
where $\chi(q)$ is the Lagrange multiplier associated with the problem \eqref{eq:linear} whose explicit expression can be found in Equation \eqref{eq:multiplier} in the appendix. Moreover if the deal is small compared to the bank's portfolio (i.e. $q$ is small) we have the following:
\[
P(q)\simeq\frac{1}{1+r+\lambda}\left(-q\b{E}[Y]+\frac{1}{2}\left(\frac{(\sigma_Y^2-b^TA^{-1}b)q^2\nu}{C_0}\right)\sqrt{\mu^TA^{-1}\mu}+qb^TA^{-1}\mu\right)\].
\end{prop}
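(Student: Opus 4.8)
The plan is to solve the constrained optimisation explicitly by Lagrange multipliers, read off $\widetilde{v}(q)$, and then impose the indifference condition. Since the terms $q\b{E}[Y]$ and $(C_0+P(q))(1+r+\lambda)$ in the objective of \eqref{eq:linear} do not depend on $\theta$, the supremum reduces to maximising the linear functional $\theta^T\mu$ over the ellipsoidal surface $\{\theta : \theta^TA\theta+2q\theta^Tb+\sigma_Y^2q^2=C_0^2/\nu^2\}$. As $A$ is a covariance matrix, I assume it positive definite, so this surface is compact and the continuous objective attains its maximum; the first-order Lagrange conditions therefore characterise the maximiser.

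First I would form the Lagrangian $L(\theta,\chi)=\theta^T\mu-\chi(\theta^TA\theta+2q\theta^Tb+\sigma_Y^2q^2-C_0^2/\nu^2)$ and set $\nabla_\theta L=0$, which gives $\mu=2\chi(A\theta+qb)$ and hence the optimal strategy $\theta^\ast(q)=\frac{1}{2\chi}A^{-1}\mu-qA^{-1}b$. Substituting this back into the constraint, the cross terms in $q$ cancel and one is left with $\frac{1}{4\chi^2}\mu^TA^{-1}\mu+q^2(\sigma_Y^2-b^TA^{-1}b)=C_0^2/\nu^2$, which I would solve for the multiplier, selecting the positive root (the one yielding a maximum rather than a minimum of the linear objective) to obtain the explicit $\chi(q)$ recorded in Equation \eqref{eq:multiplier}.

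Next I would evaluate the objective at $\theta^\ast$, using $\theta^{\ast T}\mu=\frac{1}{2\chi(q)}\mu^TA^{-1}\mu-qb^TA^{-1}\mu$, so that $\widetilde{v}(q)=q\b{E}[Y]+\frac{1}{2\chi(q)}\mu^TA^{-1}\mu-qb^TA^{-1}\mu+(C_0+P(q))(1+r+\lambda)$. Imposing $\widetilde{v}(q)=\widetilde{v}(0)$ with $P(0)=0$, the common term $C_0(1+r+\lambda)$ cancels, and solving for $P(q)$ reproduces the exact formula in the statement.

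For the approximation I would write $\frac{1}{\chi(q)}=\frac{2}{\sqrt{\mu^TA^{-1}\mu}}\sqrt{C_0^2/\nu^2-q^2(\sigma_Y^2-b^TA^{-1}b)}$ and Taylor-expand the square root to second order in $q$; the leading correction $\frac{1}{\chi(0)}-\frac{1}{\chi(q)}\simeq \frac{(\sigma_Y^2-b^TA^{-1}b)q^2\nu}{C_0\sqrt{\mu^TA^{-1}\mu}}$ feeds directly into the exact formula and yields the stated small-$q$ expression. The main obstacle is the algebraic simplification when substituting $\theta^\ast$ into the quadratic constraint, namely getting the $q$-linear cross terms to cancel cleanly, together with the bookkeeping needed to confirm that the chosen sign of $\chi$ corresponds to the maximiser and not the minimiser.
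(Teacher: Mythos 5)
Your proposal is correct and follows essentially the same route as the paper's own proof: Lagrange multipliers on the squared constraint, the optimal strategy $\theta^\ast(q)=\frac{1}{2\chi(q)}A^{-1}\mu-qA^{-1}b$, cancellation of the $q$-linear cross terms when substituting back, the explicit multiplier \eqref{eq:multiplier}, the indifference condition with $P(0)=0$, and a second-order Taylor expansion of $1/\chi(q)$ for the small-$q$ formula. Your added remarks on compactness of the constraint set and on selecting the sign of $\chi$ corresponding to the maximiser are small refinements the paper leaves implicit, but the argument is the same.
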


From the proposition above we see that the price that would make the bank break even is approximated by the sum of three terms: 
\begin{itemize}
\item a term which is minus the expectation of the payoff under the real world measure 
\item a correction due to the presence of the capital constraint 
\item and a term that represents the expected excess return of the hedging portfolio, i.e. the difference between the expected value of the hedging portfolio at time 1 and its price at time 0 accrued at $1+r+\lambda$.  
\end{itemize}

Lastly, in the spirit of the previous section we want to understand the role of the RAROC in the obtained price.
Our expected P\&L, associated with the best strategy, is:
\[
\widetilde{v}(0)-C_0=\left(\frac{\mu^TA^{-1}}{2\chi(0)}\right)\mu+C_0(r+\lambda)=\sqrt{\mu A^{-1}\mu}\frac{C_0}{\nu}+C_0(r+\lambda)
\]

While to simplify our formulae we choose the same ES metric we used  as capital constraint as denominator for our RAROC\footnote{We can in principle use V@R ad different confidence level, and this can be easily integrated in our model by using a multiplicative factor.}. Hence we have that:
\[
RAROC(\bar{\theta},0)=\frac{\sqrt{\mu A^{-1}\mu}\frac{C_0}{\nu}+C_0(r+\lambda)}{C_0}=\frac{\sqrt{\mu A^{-1}\mu}}{\nu}+r+\lambda.
\] 
For notational simplicity we define
\[
h:=\frac{\sqrt{\mu A^{-1}\mu}}{\nu}+r+\lambda,
\]
and we plug it obtaining:
\[P(q)\simeq\frac{1}{1+r+\lambda}\left(-q\b{E}[Y]+\frac{1}{2}\left(\frac{(\sigma_Y^2-b^TA^{-1}b)q^2\nu^2}{C_0}\right)(h-r-\lambda)+qb^TA^{-1}\mu\right).
\]
This formula clearly highlights the adjustment due to capital constraint that needs to be made so that the bank can break even. Moreover we can see that this formula we obtained is very similar to the simplified one we obtained in the previous section.

\section{The Full One Period Model}\label{sec:shareholders}
While in the previous section we took the whole bank perspective, now we analyse how capital constraints play a role in the shareholder's valuation of a deal. Shareholder's objective function is different from the whole-bank one, since in case of default (negative equity at time 1 in our model) they have limited liability. This means that the best strategy for a shareholder solves the following optimisation problem:
\[
v(q)=\sup_{\theta\in\Theta(q)}\b{E}\left[(X(\theta,q))^+\right],
\]
where $\Theta(q)$ is defined as in the previous section by Equation \eqref{eq:admissible_strategies}. 

For convenience of notation, we indicate the bank's default event as
\[
D(\theta,q)=\{\omega\in \Omega \, | \, X(q,\theta)\leq 0\}
\]
and $D^c(\theta,q)$ will denote the survival event.

Also in this case we want to determine the price $P(q)$ such that
\[
v(q)=v(0).
\]

In this section however we focus on an approximated solution of our problem. More precisely we consider the following expansion for $v(q)$ near 0
\[
v(q)=v(0)+q\frac{d v}{d q}|_{q=0}+q^2\frac{1}{2}\frac{d^2 v}{d^2 q}|_{q=0}+o(q^2),
\]
and, inspired from the result of Proposition \ref{prop:simplified}, we look for price $P$ which is a second degree polynomial in $q$ such that
\[
q\frac{d v}{d q}|_{q=0}+q^2\frac{1}{2}\frac{d^2 v}{d^2 q}|_{q=0}=0.
\]
This is a second order extension of what in the literature is usually called \emph{marginal price}. More precisely the marginal price is a first order approximation of $P(q)$, meaning that it only requires $\frac{d v}{d q}|_{q=0}$ to be null. For a review of marginal pricing see \cite{Foldes}, for an application to option pricing see \cite{DavisOption} and for an application to funding costs see \cite{AndersenDuffieSong}. Here we focus on second order expansion since we want to understand the impact of capital constraints that in our model are represented by variance constraints and hence are of second order in $q$.
 The following proposition gives a solution to the limited liability problem in the small deal approximation discussed above:
\begin{prop}\label{prop:simplifiedpositive}
Consider, for a fixed $q$, the problem of finding $P$ such that
\[
v(q)=v(0),
\]
where
\begin{equation}\label{eq:nonlinear}
  \begin{aligned}
  v(q)&=\sup_{\theta\in \Theta(q)}\underbrace{\b{E}\left[(qY+\theta^T(S_1-S_0(1+r+\lambda))+(C_0+P(q))(1+r+\lambda))^+\right]}_{f(\theta,q)},\\
  \Theta(q)&=\left\{\theta\in\b{R}^2 \ | \ C_0=\underbrace{\nu\sqrt{\theta^TA\theta+2q\theta^Tb+\sigma_Y^2q^2}}_{\sqrt{g(\theta,q)}}\right\}.
  \end{aligned}
\end{equation}
The marginal price (second order) solution to Equation \eqref{eq:nonlinear} is given by
\[
P(q)\simeq \frac{\b{E}\left[-\ind{D^c}Y\right]+\chi(0)2\theta^Tb}{\b{P}[D^c](1+r+\lambda)}q-\frac{1}{2}\frac{\psi(0)-\chi(0)2\sigma^2_Y-2\chi'(0)(2\theta^Tb)+\g{C}(0)}{(1+r+\lambda)}q^2,
\]
where if we indicate $D^c(h)=D^c(\theta^*,q+h)$ and $D^c=D^c(0)$, we have 
\[\psi(0)=\lim_{h\to 0}\b{E}\left[(Y+P(0)(1+r+\lambda))\frac{\ind{D^c(h)}-\ind{D^c}}{h}\right]\] 
and
\[
\g{C}(0)=\sum_{i=1}^d\left(\sum_{j=1}^d\left(\chi(0)\frac{\partial^2 g}{\partial \theta_i\partial\theta_j}|_{q=0}-\frac{\partial^2 f}{\partial \theta_i\partial\theta_j}|_{q=0}\right)\frac{\partial \theta_j^*}{\partial q}|_{q=0}\right)\frac{\partial \theta^*_i}{\partial q}|_{q=0},
\]
where $\theta^*$ is a solution to \eqref{eq:nonlinear}.
\end{prop}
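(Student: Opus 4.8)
The plan is to regard $v(q)$ as the value function of the equality-constrained program \eqref{eq:nonlinear}, to compute $\frac{dv}{dq}$ and $\frac{d^2v}{dq^2}$ at $q=0$ by an envelope plus implicit-function argument, and then to fix the coefficients of the ansatz $P(q)=P_1q+P_2q^2$ by imposing the first- and second-order indifference conditions $\frac{dv}{dq}\big|_{0}=0$ and $\frac{d^2v}{dq^2}\big|_{0}=0$. I attach a multiplier $\chi(q)$ to the binding constraint and form the Lagrangian $f(\theta,q)-\chi\,(g(\theta,q)-C_0^2)$, so that an optimiser $\theta^*(q)$ with multiplier $\chi(q)$ solves the first-order system $\nabla_\theta f=\chi\nabla_\theta g$ together with $g=C_0^2$. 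Throughout I assume $X(\theta,q)$ has a density, which is what allows the non-smooth integrand $(\cdot)^+$ to be differentiated under the expectation.

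The analytic engine is the differentiation of $f(\theta,q)=\b{E}[(X)^+]$. Since $t\mapsto t^+$ is Lipschitz with derivative $\ind{t>0}$ off a null set, dominated convergence yields $\partial_{\theta_i}f=\b{E}[\ind{D^c}\,\partial_{\theta_i}X]$ and $\partial_q f=\b{E}[\ind{D^c}\,\partial_q X]$, with $\partial_q X=Y+P'(q)(1+r+\lambda)$ and $\ind{D^c}=\ind{X>0}$. For the first order I invoke the envelope theorem: differentiating $g(\theta^*(q),q)=C_0^2$ gives $\nabla_\theta g\cdot\frac{d\theta^*}{dq}=-\partial_q g$, so the $\frac{d\theta^*}{dq}$ terms cancel against $\nabla_\theta f=\chi\nabla_\theta g$ and one is left with
\[
\frac{dv}{dq}=\partial_q f-\chi\,\partial_q g .
\]
Evaluating at $q=0$ with $\partial_q g\big|_{0}=2\theta^Tb$ and $\partial_q f\big|_{0}=\b{E}[\ind{D^c}Y]+\b{P}[D^c](1+r+\lambda)P'(0)$, the equation $\frac{dv}{dq}\big|_{0}=0$ is affine in $P'(0)=P_1$ and solving it returns the stated first-order coefficient.

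For the second order I differentiate $\frac{dv}{dq}=\partial_q f-\chi\,\partial_q g$ once more in $q$, obtaining
\[
\frac{d^2v}{dq^2}=\partial_q^2 f-\chi\,\partial_q^2 g-\chi'\,\partial_q g+\big(\nabla_\theta\partial_q f-\chi\,\nabla_\theta\partial_q g\big)\cdot\tfrac{d\theta^*}{dq}.
\]
To turn the cross term into $\g{C}(0)$ I differentiate the first-order system $\nabla_\theta f=\chi\nabla_\theta g$ in $q$, solve the resulting linear system (the implicit-function step) for $\frac{d\theta^*}{dq}\big|_0$ and $\chi'(0)$, and substitute; using $\nabla_\theta g\cdot\frac{d\theta^*}{dq}=-\partial_q g$ the cross term becomes $(\frac{d\theta^*}{dq})^{T}(\chi\nabla_\theta^2 g-\nabla_\theta^2 f)\frac{d\theta^*}{dq}-\chi'\partial_q g=\g{C}(0)-\chi'\partial_q g$, which combines with the explicit $-\chi'\partial_q g$ to produce the factor $-2\chi'(0)(2\theta^Tb)$. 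The remaining explicit derivatives give $-\chi(0)2\sigma_Y^2$ (since $\partial_q^2 g=2\sigma_Y^2$) and, inside $\partial_q^2 f$, the $P''$-term $\b{P}[D^c](1+r+\lambda)P''(0)$ together with the derivative of the survival indicator, which is exactly $\psi(0)=\lim_{h\to0}\b{E}[(Y+\cdots)(\ind{D^c(h)}-\ind{D^c})/h]$. Imposing $\frac{d^2v}{dq^2}\big|_{0}=0$ then gives a linear equation for $P''(0)=2P_2$, whose solution is the stated quadratic coefficient.

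The step I expect to be hardest is the rigorous treatment of the second $q$-derivative of $\b{E}[(X)^+]$. The first derivative only sees the indicator $\ind{D^c}$, but differentiating a second time must pass through this indicator and produces the boundary contribution $\psi(0)$ living on the default frontier $\{X=0\}$; making this limit well defined and interchangeable with $\b{E}$ requires quantitative regularity of the law of $X(\theta^*,q)$ near the origin, uniformly for small $q$. By comparison the envelope identity, the implicit differentiation yielding $\frac{d\theta^*}{dq}$ and $\chi'(0)$, and the algebra assembling $\g{C}(0)$ are standard, if lengthy, second-order sensitivity computations.
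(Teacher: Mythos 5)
Your proposal is correct and follows essentially the same route as the paper: the envelope identity $\frac{dv}{dq}=\partial_q f-\chi\,\partial_q g$, differentiation of the first-order conditions combined with the differentiated constraint $\nabla_\theta g\cdot\frac{d\theta^*}{dq}=-\partial_q g$ to reduce the cross term to $\g{C}(0)-\chi'\partial_q g$ (hence the doubled $-2\chi'(0)(2\theta^Tb)$), and a quadratic ansatz for $P$ whose coefficients are fixed by the first- and second-order indifference conditions. The only minor divergence is that you retain the factor $\b{P}[D^c]$ multiplying $P''(0)(1+r+\lambda)$ inside $\partial_q^2 f$, which is actually more careful than the paper's own computation, where that factor is dropped, so the paper's stated $q^2$-coefficient carries $(1+r+\lambda)$ rather than $\b{P}[D^c](1+r+\lambda)$ in its denominator.
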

Also in this case we can identify the different contributions to the price: the expected value term $\b{E}\left[\ind{D^c}Y\right]$, the two covariance bits $-q\chi(0)2\theta^Tb$ and $-2q^2\chi'(0)(2\theta^Tb)$, a convexity adjustment $(\psi(0)-\g{C}(0))q^2$ term and a capital part $-\chi(0)2\sigma^2_Y$.

\section{A median approach}\label{sec:median}
Now we formulate our problem, including limited liability, using the median instead of the expected value. We do this to compare the results of the previous sections with the case in which we use the median as objective function in our optimisation problem. On the use of quantiles function as objective functions for decision making and asset pricing see for example \cite{giovannetti,rostek}. 

\begin{defi}[Median]
We define a \emph{median} $\c{M}[X]$ of a real valued random variable $X$ as
\[
\c{M}[X]:=\inf_{m}\left\{m \, | \, \b{P}[X\leq m]\geq\frac{1}{2}\right\}
\]
\end{defi}
In this section we make the reasonable assumption that there exist at least a $\bar{\theta}\in \Theta(q)=\left\{\theta\in\b{R}^2 \ | \ C_0=\nu \sqrt{\theta^TA\theta+2q\theta^Tb+\sigma_Y^2q^2}\right\}$ such that
\[
qm_Y+\bar{\theta}^T(m_1-S_0(1+r+\lambda))+(C_0+P(q))(1+r+\lambda)>0,
\]
where $m_1=\b{E}[S_1]$ and $m_Y=\b{E}[Y]$
This is equivalent to say that there exist an admissible portfolio choice such that on average the bank is solvent at time 1.
Using the median as objective function we can then obtain the following.

\begin{prop}\label{prop:median}
Consider, for a fixed $q$, the problem of finding $P$ such that
\[
w(q)=w(0),
\]
where
\begin{equation}\label{eq:median}
  \begin{aligned}
  w(q)&=\sup_{\theta\in \Theta(q)}\c{M}\left[(qY+\theta^T(S_1-S_0(1+r+\lambda))+(C_0+P(q))(1+r+\lambda))^+\right],\\
  \Theta(q)&=\left\{\theta\in\b{R}^2 \ | \ C_0=\nu \sqrt{\theta^TA\theta+2q\theta^Tb+\sigma_Y^2q^2}\right\}.
  \end{aligned}
\end{equation}
Equation \eqref{eq:median} has the same solution of Equation \eqref{eq:linear} and is hence given by
\[P(q)=\frac{1}{1+r+\lambda}\left(q\b{E}[-Y]+\frac{1}{2}\left(\frac{1}{\chi(0)}-\frac{1}{\chi(q)}\right)\mu^TA^{-1}\mu+2qb^TA^{-1}\mu\right).
\]
\end{prop}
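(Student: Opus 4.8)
The plan is to reduce the median problem \eqref{eq:median} to the already solved linear problem \eqref{eq:linear} by showing that, at the strategies that matter, the median objective coincides with the mean objective $\widetilde{v}$. The argument rests on two elementary facts about the median together with the solvency assumption stated just before the proposition.

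First I would exploit the monotonicity of the positive part. Since $x\mapsto x^+$ is non-decreasing, the median $\c{M}[X]=\inf\{m\mid\b{P}[X\le m]\ge 1/2\}$ commutes with it, that is $\c{M}[(X)^+]=(\c{M}[X])^+$. This can be read off directly from the definition: if $\c{M}[X]>0$ the defining infimum is attained in the region $m>0$, where $\{X^+\le m\}=\{X\le m\}$, so the two infima agree; if instead $\c{M}[X]\le 0$ then $\b{P}[X^+\le 0]=\b{P}[X\le 0]\ge 1/2$ while $X^+\ge 0$, forcing $\c{M}[(X)^+]=0$. Hence the objective in \eqref{eq:median} equals $(\c{M}[X(\theta,q)])^+$ for every admissible $\theta$.

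The second and central step is to identify $\c{M}[X(\theta,q)]$ with $\b{E}[X(\theta,q)]$. Since $X(\theta,q)=qY+\theta^T(S_1-S_0(1+r+\lambda))+(C_0+P(q))(1+r+\lambda)$ is an affine functional of the jointly Gaussian vector $(S_1,Y)$ that underlies the variance-based constraint $\Theta(q)$, it is itself Gaussian, and for a symmetric law the median and the mean coincide. Therefore $w(q)=\sup_{\theta\in\Theta(q)}(\b{E}[X(\theta,q)])^+$. I expect this symmetry reduction to be the main obstacle, because it is precisely where the distributional assumption implicit in writing V@R or ES as $\nu$ times the standard deviation has to be made explicit: without symmetry the median-optimal strategy would genuinely differ from the mean-optimal one, and the two problems would no longer share a solution.

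It then remains only to discard the outer positive part. The solvency assumption furnishes a $\bar{\theta}\in\Theta(q)$ with $\b{E}[X(\bar{\theta},q)]>0$, so that $\widetilde{v}(q)=\sup_{\theta}\b{E}[X(\theta,q)]>0$; consequently $\sup_{\theta}(\b{E}[X(\theta,q)])^+=\max(\widetilde{v}(q),0)=\widetilde{v}(q)$, and by the same token $w(0)=\widetilde{v}(0)$. The indifference equation $w(q)=w(0)$ is thus equivalent to $\widetilde{v}(q)=\widetilde{v}(0)$, and the expression for $P(q)$ follows verbatim from Proposition \ref{prop:simplified}.
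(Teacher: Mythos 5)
Your proposal is correct and takes essentially the same route as the paper's proof: the same key lemma that $\c{M}[(X)^+]=(\c{M}[X])^+$, the same identification of median with mean under the normality assumption of Section \ref{sec:linear}, and the same use of the solvency assumption to drop the outer positive part and reduce the problem to Equation \eqref{eq:linear} and Proposition \ref{prop:simplified}. The only difference is cosmetic---you justify the lemma via monotonicity of $x\mapsto x^+$ where the paper argues by a case-by-case computation of the relevant probabilities---and your remark that the Gaussian symmetry must be made explicit matches exactly what the paper does.
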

Hence we see that the median approach gives the same result as the linear one. In practice this means that the use of the median as a performance indicator puts shareholders on the same ground as the whole bank, aligning their interests with the bond holders, also for what concerns the impact of capital requirements on the valuation process.

\section{Conclusions and future work}

In this note we investigated the impact of capital constraints on the valuation of derivative contracts. In accordance with what has been perceived by the market we showed how these costs are present due to the incompleteness of the market. Moreover we precisely quantified these costs in a different number of models of different complexities. We showed that the shape of the capital costs is somewhat robust with respect to the shareholder/whole bank point of view problem and is proportional to the marginal capital requirement of the deal.

As future work we would like to compare what we obtain with what has been done by other authors in the practitioner's literature (\cite{GreenKenyonDennis,albanese2016capital}). This poses some more questions related to the measure used to compute the impact of the capital constraints.

\appendix

\section{Proofs}
\begin{proof}[Proof of Proposition \ref{prop:simplified}]
By squaring the constraint equation one can write the first order conditions for problem \eqref{eq:linear} as
\[
(m-S_0(1+r+\lambda))-2\chi(q)(A\theta+qb)=0,
\]
where $\chi(q)$ is the Lagrange multiplier and $m=\b{E}[S_1]$.
Solving for $\theta$ we obtain
\[
  \theta=\frac{A^{-1}\mu}{2\chi(q)}-qA^{-1}b
\]
where $\mu=m-S_0(1+r+\lambda)$.
It is worth noting that the optimal portfolio is composed by two pieces: the first is the optimal portfolio for the uncorrelated part of the deal and the second is the portfolio that hedges the correlation of the new deal with the market (see for example \cite{Cochrane,FontanaSchweizer}).

Plugging the solution into the constraint equation to obtain $\chi(q)$ we have:
\begin{equation*}
\begin{aligned}
0&=\left(\frac{\mu^TA^{-1}}{2\chi(q)}-qb^TA^{-1}\right)A\left(\frac{A^{-1}\mu}{2\chi(q)}-qA^{-1}b\right)+2\left(\frac{\mu^TA^{-1}}{2\chi(q)}-qb^TA^{-1}\right)qb+\sigma_Y^2q^2-\frac{C_0^2}{\nu^2}\\
&=\frac{\mu^TA^{-1}\mu}{4\chi(q)^2}-q\frac{\mu^TA^{-1}b}{\chi(q)}+q^2b^TA^{-1}b+q\frac{\mu^TA^{-1}b}{\chi(q)}-2q^2b^TA^{-1}b+\sigma_Y^2q^2-\frac{C_0^2}{\nu^2}\\
&=\frac{\mu^TA^{-1}\mu}{4\chi(q)^2}-q^2b^TA^{-1}b+\sigma_Y^2q^2-\frac{C_0^2}{\nu^2}.
\end{aligned}
\end{equation*}
This implies
\[
0=\frac{\mu^TA^{-1}\mu}{4}+\chi^2(q)\left(q^2(\sigma_Y^2-b^TA^{-1}b)-\frac{C_0^2}{\nu^2}\right)
\]

Solving for $\chi(q)$ we have:
\begin{equation}\label{eq:multiplier}
\chi(q)=\frac{\sqrt{\mu^TA^{-1}\mu}}{2\sqrt{\frac{C_0^2}{\nu^2}-q^2(\sigma_Y^2-b^TA^{-1}b)}}
\end{equation}
This means that:
\[
\widetilde{v}(q)=q\b{E}[Y]+\left(\frac{\mu^TA^{-1}}{2\chi(q)}-qb^TA^{-1}\right)\mu+(C_0+P(q))(1+r+\lambda)
\]
and hence to have $\widetilde{v}(q)=\widetilde{v}(0)$ we need to choose:
\begin{equation*}
  \begin{aligned}
  P(q)&=\frac{1}{1+r+\lambda}\left(-q\b{E}[Y]+\frac{1}{2}\left(\frac{1}{\chi(0)}-\frac{1}{\chi(q)}\right)\mu^TA^{-1}\mu+qb^TA^{-1}\mu\right)\\
  &\simeq\frac{1}{1+r+\lambda}\left(-q\b{E}[Y]+\frac{1}{2}\left(\frac{(\sigma_Y^2-b^TA^{-1}b)q^2}{\frac{C_0}{\nu}\sqrt{\mu^TA^{-1}\mu}}\right)\mu^TA^{-1}\mu+qb^TA^{-1}\mu\right)\\
  &=\frac{1}{1+r+\lambda}\left(-q\b{E}[Y]+\frac{1}{2}\left(\frac{(\sigma_Y^2-b^TA^{-1}b)q^2\nu}{C_0}\right)\sqrt{\mu^TA^{-1}\mu}+qb^TA^{-1}\mu\right)
  \end{aligned}
\end{equation*}
\end{proof}

\begin{proof}[Proof of Proposition \ref{prop:simplifiedpositive}]

As explained in Section 3, we use the following approximation:
\begin{equation}\label{eq:TaylorV}
v(q)\simeq v(0)+q\frac{d v}{d q}|_{q=0}+\frac{1}{2}q^2\frac{d^2 v}{d^2 q}|_{q=0}.
\end{equation}
We start by computing $\frac{d v}{d q}$. In order to do so we assume that $v(q)$ is attained at a point $\theta^*(q)$, hence $v(q)=f(\theta^*(q),q)$. Then we have
\[
\frac{d v}{d q}=\frac{\partial f}{\partial q}+\sum_{i=1}^d\frac{\partial f}{\partial \theta_i}\frac{\partial \theta^*_i}{\partial q}.
\]
From first order optimality conditions we know that for each $i$

\begin{equation}\label{eq:FOC}
\frac{\partial f}{\partial \theta_i}=\chi(q)\frac{\partial g}{\partial \theta_i},
\end{equation}
where $\chi(q)$ is a Lagrange multiplier. By taking the derivative of the constraint equation we also have
\begin{equation}\label{eq:Dconstraint}
0=\frac{d C_0^2}{d q}=\frac{d g}{d q}=\frac{\partial g}{\partial q}+\sum_{i=1}^d\frac{\partial g}{\partial \theta_i}\frac{\partial \theta^*_i}{\partial q}
\end{equation}

By substitution we obtain:
\[
\frac{d v}{d q}=\frac{\partial f}{\partial q}-\chi(q)\frac{\partial g}{\partial q}.
\]
To compute the term $\frac{d^2 v}{d q^2}$ we use the chain rule obtaining
\[
\frac{d^2 v}{d q^2}=\frac{\partial^2 f}{\partial q^2}+\sum_{i=1}^d\frac{\partial^2 f}{\partial \theta_i \partial q}\frac{\partial \theta^*_i}{\partial q}-\chi(q)'\frac{\partial g}{\partial q}-\chi(q)\left(\frac{\partial^2 g}{\partial q^2}+\sum_{i=1}^d\frac{\partial^2 g}{\partial \theta_i \partial q}\frac{\partial \theta^*_i}{\partial q}\right).
\]
By taking the derivative with respect to $q$ of the first order conditions, we have the following identity
\[
\frac{\partial^2 f}{\partial \theta_i\partial q}+\sum_{j=1}^d\frac{\partial^2 f}{\partial \theta_i\partial\theta_j}\frac{\partial \theta_j^*}{\partial q}=\chi(q)'\frac{\partial g}{\partial \theta_i}+\chi(q)\left(\frac{\partial^2 g}{\partial \theta_i \partial q}+\sum_{j=1}^d\frac{\partial^2 g}{\partial \theta_i\partial\theta_j}\frac{\partial \theta_j^*}{\partial q}\right),
\]
that we can plug in the above equation to obtain the final expression for $\frac{d^2 v}{d q^2}$:
\begin{equation}
  \begin{aligned}
  &\frac{d^2 v}{d q^2}=\frac{\partial^2 f}{\partial q^2}+\sum_{i=1}^d\left(\frac{\partial^2 f}{\partial \theta_i \partial q}-\chi(q)\frac{\partial^2 g}{\partial \theta_i \partial q}\right)\frac{\partial \theta^*_i}{\partial q}-\chi(q)'\frac{\partial g}{\partial q}-\chi(q)\frac{\partial^2 g}{\partial q^2}\\
  &= \frac{\partial^2 f}{\partial q^2}-\chi(q)'\frac{\partial g}{\partial q}-\chi(q)\frac{\partial^2 g}{\partial q^2}\\
  &+\sum_{i=1}^d\left(\chi(q)'\frac{\partial g}{\partial \theta_i}+\sum_{j=1}^d\left(\chi(q)\frac{\partial^2 g}{\partial \theta_i\partial\theta_j}-\frac{\partial^2 f}{\partial \theta_i\partial\theta_j}\right)\frac{\partial \theta_j^*}{\partial q}\right)\frac{\partial \theta^*_i}{\partial q}\\
  &= \frac{\partial^2 f}{\partial q^2}-2\chi(q)'\frac{\partial g}{\partial q}-\chi(q)\frac{\partial^2 g}{\partial q^2}\\
  &\underbrace{+\sum_{i=1}^d\left(\sum_{j=1}^d\left(\chi(q)\frac{\partial^2 g}{\partial \theta_i\partial\theta_j}-\frac{\partial^2 f}{\partial \theta_i\partial\theta_j}\right)\frac{\partial \theta_j^*}{\partial q}\right)\frac{\partial \theta^*_i}{\partial q}}_{\g{C}(q)}.\\
  \end{aligned}
\end{equation}
To use the approximation \eqref{eq:TaylorV} we hence have to compute the following derivatives:
\[
\frac{\partial f}{\partial q}|_{q=0}, \, \frac{\partial g}{\partial q}|_{q=0}, \, \frac{\partial^2 f}{\partial q^2}|_{q=0}, \, \frac{\partial^2 g}{\partial q^2}|_{q=0},
\]
and $\g{C}(0)$.
We start from the first order derivatives (these computations are very similar to the ones in \cite{AndersenDuffieSong}):
\begin{equation*}
  \begin{aligned}
\frac{\partial f}{\partial q}&=\lim_{h\to 0}\frac{1}{h} \b{E}\left[ (\ind{D^c(h)}-\ind{D^c})(qY+\theta^T(S_1-S_0(1+r+\lambda))+C_0(1+r+\lambda))\right]\\
&+\b{E}\left[\ind{D^c}Y\right]+\lim_{h\to 0}\frac{1}{h}\b{E}\left[(\ind{D^c(h)}P(q+h)-\ind{D^c}P(q))(1+r+\lambda))\right]\\
&=\lim_{h\to 0}\frac{1}{h} \b{E}\left[ (\ind{D^c(h)}-\ind{D^c})(qY+\theta^T(S_1-S_0(1+r+\lambda))+C_0(1+r+\lambda))\right]\\
&+\b{E}\left[\ind{D^c}Y\right]+\lim_{h\to 0}\frac{1}{h}\b{E}\left[(\ind{D^c(h)}(P(q+h)+P(q)-P(q))-\ind{D^c}P(q))(1+r+\lambda))\right]\\
&=\lim_{h\to 0}\frac{1}{h} \b{E}\left[ (\ind{D^c(h)\cap D}-\ind{D^c\cap D(h)})(qY+\theta^T(S_1-S_0(1+r+\lambda))+(C_0+P(q))(1+r+\lambda))\right]\\
&+\b{E}\left[\ind{D^c}(Y+P'(q)(1+r+\lambda))\right].
  \end{aligned}
\end{equation*}

Moreover we have:
\begin{equation*}
  \begin{aligned}
&\lim_{h\to 0}\frac{1}{h} \b{E}\left[ |(\ind{D^c(h)\cap D}-\ind{D^c\cap D(h)})(qY+\theta^T(S_1-S_0(1+r+\lambda))+(C_0+P(q))(1+r+\lambda))|\right]\\
&\leq \lim_{h\to 0}\frac{1}{h} \b{E}\left[ (\ind{D^c(h)\cap D}+\ind{D^c\cap D(h)})|qY+\theta^T(S_1-S_0(1+r+\lambda))+(C_0+P(q))(1+r+\lambda)|\right]\\
&\leq \lim_{h\to 0} \b{E}\left[ (\ind{D^c(h)\cap D}+\ind{D^c\cap D(h)})|\frac{hY+(P(q+h)-P(q))(1+r+\lambda)}{h}|\right]=0
  \end{aligned}
\end{equation*}
hence we obtain that $\frac{\partial f}{\partial q}=\b{E}\left[\ind{D^c}(Y+P'(q)(1+r+\lambda))\right]$.
While for the constraint we have:
\[
\frac{\partial g}{\partial q}=2\sigma_Y^2q+2b^T\theta^*.
\]
For what concerns the second order derivatives we have:
\begin{equation*}
  \begin{aligned}
    \frac{d^2 f}{d^2 q}|_{q=0}&=\lim_{h\to 0}\frac{1}{h}\b{E}\left[\ind{D^c(h)}(Y+P'(h)(1+r+\lambda))-\ind{D^c}(Y+P'(0)(1+r+\lambda))
    \right]\\
    &=\lim_{h\to 0}\b{E}\left[Y\frac{\ind{D^c(h)}-\ind{D^c}}{h}\right]+\lim_{h\to 0}\frac{1}{h}\b{E}\left[(\ind{D^c(h)}(P'(h)+P'(0)-P'(0))-\ind{D^c}P'(0))(1+r+\lambda)\right]\\
    &=\underbrace{\lim_{h\to 0}\b{E}\left[(Y+P(0)(1+r+\lambda))\frac{\ind{D^c(h)}-\ind{D^c}}{h}\right]}_{\psi(0)}+P''(0)(1+r+\lambda)
  \end{aligned}
\end{equation*}
And for the constraint:
\[
\frac{\partial^2 g}{\partial^2 q}=2\sigma_Y^2
\]
Hence overall we obtain:
\begin{equation*}
\begin{aligned}
v(q)-v(0)&\simeq\left( \b{E}\left[\ind{D^c}(Y+P'(0)(1+r+\lambda))\right]-\chi(0)2\theta^Tb\right)q\\
&+\frac{1}{2}\left(\psi(0)+P''(0)(1+r+\lambda)-\chi(0)2\sigma^2_Y-2\chi'(0)(2\theta^Tb)+\g{C}(0)\right)q^2
\end{aligned}
\end{equation*}
So if we look for $P$ in the form of a second order polynomial, we obtain:
\[
P(q)\simeq \frac{\b{E}\left[-\ind{D^c}Y\right]+\chi(0)2\theta^Tb}{\b{P}[D^c](1+r+\lambda)}q-\frac{1}{2}\frac{\psi(0)-\chi(0)2\sigma^2_Y-2\chi'(0)(2\theta^Tb)+\g{C}(0)}{(1+r+\lambda)}q^2
\]
\end{proof}

It is interesting to check that what we obtained is compatible with the linear case, i.e. Proposition \ref{prop:simplified}.
In that case we have
\[
\chi(q)=\frac{\sqrt{\mu^TA^{-1}\mu}}{2\sqrt{\frac{C_0^2}{\nu^2}-q^2(\sigma_Y^2-b^TA^{-1}b)}}
\]
and
\[
  \theta(q)=\frac{A^{-1}\mu}{2\chi(q)}-qA^{-1}b
\]
hence
\[
\chi(0)=\frac{\sqrt{\mu^TA^{-1}\mu}}{2\frac{C_0}{\nu}},
\]
\[
\chi'(0)=0.
\]
and
\[
  \theta^*(0)=\frac{A^{-1}\mu}{2\chi(0)},
\]
If we substitute
\begin{equation}
  \begin{aligned}
  \g{C}(0)&=\sum_{i=1}^d\left(\sum_{j=1}^d\left(\chi(0)\frac{\partial^2 g}{\partial \theta_i\partial\theta_j}-\frac{\partial^2 f}{\partial \theta_i\partial\theta_j}\right)\frac{\partial \theta_j^*}{\partial q}\right)\frac{\partial \theta^*_i}{\partial q}\\
  &=2\chi(0)\left((-A^{-1}\mu\frac{\chi'(0)}{2\chi^2(0)}-A^{-1}b)^TA(-A^{-1}\mu\frac{\chi'(0)}{2\chi^2(0)}-A^{-1}b)\right)\\
  &=2\chi(0)\left((-A^{-1}b)^TA(-A^{-1}b)\right)\\
  &=\frac{\sqrt{\mu^TA^{-1}\mu}}{\frac{C_0}{\nu}}\left(b^TA^{-1}b\right)
  \end{aligned}
\end{equation}
we obtain:
\[
P(q)\simeq\frac{1}{1+r+\lambda}\left((\b{E}[-Y]+\mu^TA^{-1}b)q+\frac{1}{2}\left(\frac{\sqrt{\mu^TA^{-1}\mu}\nu}{C_0}\sigma_Y^2-\frac{\sqrt{\mu^TA^{-1}\mu}}{C_0}\left(\nu b^TA^{-1}b\right)\right)q^2\right),
\]
which corresponds to the result of Proposition \ref{prop:simplified}.

\begin{proof}[Proof of Proposition \ref{prop:median}]
For a real valued random variable $X$ with $\b{P}[X>0]>0$ we have
\[
\c{M}[(X)^+]=(\c{M}[X])^+.
\]
To prove this we can distinguish two cases:
\begin{itemize}
  \item if $\c{M}[X]\leq 0$ we just have to prove that $\c{M}[(X)^+]=0$. Clearly for all $x<0$ we have that $\b{P}[(X)^+\leq x]=0<\frac{1}{2}$, while $\b{P}[(X)^+\leq 0]\geq\frac{1}{2}$ since $\c{M}[X]\leq 0$.
  \item if $\c{M}[X]>0$ we need to show that $\c{M}[(X)^+]=\c{M}[X]$. Clearly
  \begin{equation*}
    \begin{aligned}
    &\b{P}[(X)^+\leq \c{M}[X]]=\b{P}[(X)^+=0]+\b{P}[0<(X)^+\leq \c{M}[X]]\\
    &=\b{P}[X\leq0]+\b{P}[0<X\leq \c{M}[X]]=\b{P}[X\leq \c{M}[X]]\geq \frac{1}{2}.
    \end{aligned}
  \end{equation*}
Now consider $0<x<\c{M}[X]$then we can apply the same reasoning and we can show that
\[
\b{P}[(X)^+\leq x]=\b{P}[X\leq x]<\frac{1}{2},
\]
since $\c{M}[X]$ is the infimum of the set $\{x\, | \, \b{P}[X\leq x]\geq \frac{1}{2}\}$. In this way we have shown that $\c{M}[X]$ is the infimum on the set $\{x\, | \, \b{P}[(X)^+\leq x]\geq \frac{1}{2}\}$ and we have what we want.
\end{itemize}
Hence we can write
\[
w(q)=\sup_{\theta\in \Theta(q)}
(\c{M}\left[qY+\theta^T(S_1-S_0(1+r+\lambda))+(C_0+P(q))(1+r+\lambda)\right])^+.
\]
Now if we suppose that we are under the normality assumption of Section \ref{sec:linear} we have:
\begin{equation}\label{eq:medianislinear}
  \begin{aligned}
  w(q)&=\sup_{\theta\in \Theta(q)}
  (q m_Y+\theta^T(m_1-S_0(1+r+\lambda))+(C_0+P(q))(1+r+\lambda))^+\\
  &=\sup_{\theta\in \Theta(q)}q m_Y+\theta^T(m_1-S_0(1+r+\lambda))+(C_0+P(q))(1+r+\lambda),
  \end{aligned}
\end{equation}
where the last equality is due to the assumption that we made at the beginning of Section \ref{sec:median}. At this point is clear from Equation \eqref{eq:medianislinear} that our problem is equivalent to \eqref{eq:linear} and hence has the same solution.
\end{proof}
\bibliographystyle{abbrv}
\bibliography{nota_kva.bib}
\end{document}